\newcommand{\ZZ}{\mathbb{Z}}
\newcommand{\CC}{\mathcal{C}}
\newcommand{\HC}{\mathcal{H}}
\newcommand{\OC}{\mathcal{O}}
\newcommand{\PC}{\mathcal{P}}
\newcommand{\SC}{\mathcal{S}}
\newcommand{\WC}{\mathcal{W}}
\newcommand{\ket}[1]{|#1\rangle}                  
\newcommand{\bra}[1]{\left\langle #1 \right|}     
\newcommand{\dyad}[2]{\ket{#1}\bra{#2}}           
\newcommand{\Tr}{{\rm Tr}}                        
\newcommand{\vect}[1]{\mbox{\textbf{#1}}}         
\newcommand{\ii}{\mathrm{i}}					  
\long\def\ca#1\cb{} 
\newtheorem{theorem}{Theorem}
\begin{document}

\title{Standard form of qudit stabilizer groups}

\author{Vlad Gheorghiu}
\email{vgheorgh@andrew.cmu.edu}
\affiliation{Department of Physics, Carnegie Mellon University,
Pittsburgh,
Pennsylvania 15213, USA}

\date{Version of \today}

\begin{abstract}
We investigate stabilizer codes with carrier qudits of equal
dimension $D$, an arbitrary integer greater than 1. We prove
that there is a direct relation between the dimension of a qudit
stabilizer code and the size of its corresponding stabilizer,
and this implies that the code and its stabilizer are dual to
each other. We also show that any qudit stabilizer can be put in
a standard, or canonical, form using a series of Clifford gates,
and we provide an explicit efficient algorithm for doing this.
Our work generalizes known results that were valid only for
prime dimensional systems and may be useful in constructing
efficient encoding/decoding quantum circuits for qudit
stabilizer codes and better qudit quantum error correcting
codes.
\end{abstract}

\pacs{03.67.Mn, 03.67.Pp}
\maketitle

\section{Introduction\label{sct1}} 

Quantum error correction is an important part of various schemes
for quantum
computation and quantum communication, and hence quantum error
correcting
codes, first introduced about a decade ago
\cite{PhysRevA.52.R2493,PhysRevA.55.900,PhysRevLett.77.793} have
received a great deal of attention. For a detailed discussion
see Ch.~10 of
\cite{NielsenChuang:QuantumComputation}. Most of the early work
dealt with codes for qubits, with a
Hilbert space of dimension $D=2$, but qudit codes with $D>2$
have also been
studied \cite{IEEE.45.1827, IEEE.47.3065, PhysRevA.65.012308,
quantph.0111080, quantph.0202007,
IJQI.2.55,quantph.0210097,quantph.0211014,PhysRevA.78.042303,
PhysRevA.78.012306,PhysRevA.81.032326}.
They are of intrinsic interest and could turn out to be of some
practical value.

The stabilizer formalism introduced by Gottesman in
\cite{quantph.9705052} for $D=2$
(qubits) provides a compact and powerful way of generating
quantum error
correcting codes and extends the notion of linear classical
error correcting codes \cite{MacWilliamsSloane:TheoryECC} to the
quantum domain. The stabilizer formalism has been generalized to
cases where $D$ is prime or a prime
power, see e.g. \cite{IEEE.47.3065,
IEEE.51.4892,quantph.0211014,PhysRevA.78.062315}. For composite
$D$ things are more complicated and there is no immediate and
natural way of generalizing the notions. Our approach is to use
generalized Pauli operators and stabilizers defined in the same
way as in the prime case, see e.g.
\cite{PhysRevA.78.042303,PhysRevA.81.032326}. This has the
virtue that many (although not all) results that are valid in
the prime dimensional case can be extended without too much
difficulty to the more general composite case.

An important problem in the theory of stabilizer codes is what
is their structure. Is there any ``canonical" way of
representing an arbitrary stabilizer code? If yes, can one use
this fact for implementing various quantum error-correcting
tasks? For prime $D$ it turns out that there is such a standard
form, see e.g. Ch. 10.5.7 of
\cite{NielsenChuang:QuantumComputation}, and this allows for a
better understanding of the error-correcting capabilities of the
stabilizer code and also provides an efficient way of
constructing encoding/decoding circuits for such stabilizer
codes. For composite $D$ we are not aware of any such standard
form (except for the case of stabilizer codes over prime-power
finite fields \cite{quantph.0211014}), and the proof that such a
form exists is one of the main results of the current article.

The reminder of the paper is organized as follows.
Sec.~\ref{sct2} contains definitions of the generalized Pauli
group and some quantum gates used later in the paper. It also
defines rigorously qudit stabilizers and their corresponding
stabilized subspaces (or codes), together with an alternative
algebraic notation that we employ later. Sec.~\ref{sct3} contains
our main results: a ``size" theorem that relates the size of the
stabilizer group to the dimension of its stabilized subspace,
followed by a ``structure" theorem that shows that any qudit
stabilizer can be brought to a standard form through a series
of elementary quantum gates. Finally, Sec.~\ref{sct4} contains a
summary, conclusions, and some open questions.

\section{Preliminary remarks and definitions}\label{sct2}
\subsection{The generalized Pauli group on $n$ qudits}\label{sbsct2A}

We generalize Pauli operators to higher dimensional systems of
arbitrary
dimension $D$ in the following way. The $X$ and $Z$ operators
acting on a
single qudit are defined as
\begin{equation}
\label{eqn1}
Z=\sum_{j=0}^{D-1}\omega^j\dyad{j}{j},\quad
X=\sum_{j=0}^{D-1}\dyad{j}{j+1},
\end{equation}
and satisfy
\begin{equation}
\label{eqn2}
X^D=Z^D=I,\quad XZ=\omega ZX,\quad \omega = \mathrm{e}^{2 \pi
\ii /D},
\end{equation}
where \emph{the addition of integers is modulo $D$}, as will be assumed from 
now on. For a collection of $n$ qudits we use
subscripts to
identify the corresponding Pauli operators: thus $Z_i$ and $X_i$
operate on
the space of qudit $i$. The Hilbert space of a single qudit is
denoted by
$\HC$, and the Hilbert space of $n$ qudits by $\HC_n$,
respectively. Operators of the form
\begin{equation}
\label{eqn3}
\omega^{\lambda}X^{\vect{x}}Z^{\vect{z}} :=
\omega^{\lambda}X_1^{x_1}Z_1^{z_1}\otimes
X_2^{x_2}Z_2^{z_2}\otimes\cdots
\otimes X_n^{x_n}Z_n^{z_n}
\end{equation} 
will be referred to as \emph{Pauli products}, where $\lambda$ is
an integer
in $\ZZ_D$ and $\vect{x}$ and $\vect{z}$ are $n$-tuples in
$\ZZ_D^n$, the
additive group of $n$-tuple integers mod $D$. For a fixed $n$
the collection
of all possible Pauli products \eqref{eqn3} form a group under
operator
multiplication, the \emph{Pauli group} $\PC_n$. If $p$ is a
Pauli product,
then $p^D=I$ is the identity operator on $\HC_n$, and hence the
order of any
element of $\PC_n$ is either $D$ or else an integer that divides
$D$. While
$\PC_n$ is not Abelian, it has the property that two elements
\emph{commute up
  to a phase}: 
\begin{equation}\label{eqn4}
p_1p_2 = \omega^{\lambda_{12}} p_2p_1, 
\end{equation}
with $\lambda_{12}$ an
integer in $\ZZ_D$ that depends on $p_1$ and $p_2$.

\subsection{Generalization of qubit quantum gates to higher
dimensions}\label{sbsct2B}


In this subsection we define some one and two qudit gates
generalizing
various qubit gates. The qudit generalization of the Hadamard
gate is the
\emph{Fourier gate}
\begin{equation}\label{eqn5}
\mathrm{F}:=\frac{1}{\sqrt{D}}\sum_{j=0}^{D-1}\omega^{jk}\dyad{j}{k}.
\end{equation}
For an invertible integer $q\in\ZZ_D$ (i.e. integer for which
there exists $\bar q\in\ZZ_D$ such that $q \bar q \equiv 1 \bmod
D$), we define a
\emph{multiplicative gate}
\begin{equation}\label{eqn6}
 \mathrm{S}_q:=\sum_{j=0}^{D-1}\dyad{j}{jq},
\end{equation}
where $qj$ means multiplication mod $D$. The requirement that
$q$ be
invertible ensures that $\mathrm{S}_q$ is unitary; for a qubit
$\mathrm{S}_q$ is just the identity.


For two distinct qudits $a$ and $b$ we define the CNOT gate as
\begin{equation}
\label{eqn7}
\mathrm{CNOT}_{ab}:=\sum_{j=0}^{D-1}\dyad{j}{j}_a\otimes
X_b^j=\sum_{j,k=0}^{D-1}\dyad{j}{j}_a\otimes \dyad{k}{k+j}_b,
\end{equation}
the obvious generalization of the qubit Controlled-NOT, where
$a$ labels the control qudit and $b$ labels the target qudit.
Next the SWAP gate is defined as
\begin{equation}
\label{eqn8}
\mathrm{SWAP}_{ab}:=\sum_{j,k=0}^{D-1}\dyad{k}{j}_a\otimes
\dyad{j}{k}_b.
\end{equation}
It is easy to check that SWAP gate is hermitian and does indeed
swap
qudits $a$ and $b$. Unlike the qubit case, the qudit SWAP gate
is not a
product of three CNOT gates, but can be expressed in terms of
CNOT gates and
Fourier gates as
\begin{equation}\label{eqn9}
\mathrm{SWAP}_{ab}=\mathrm{CNOT}_{ab}(\mathrm{CNOT}_{ba})^{\dagger}\mathrm{CNOT}_{ab}(\mathrm{F}_a^2\otimes
I_b),
\end{equation}
with 
\begin{equation}\label{eqn10}
(\mathrm{CNOT}_{ba})^{\dagger}=(\mathrm{CNOT}_{ba})^{D-1}=(I_a\otimes
\mathrm{F}_b^2)\mathrm{CNOT}_{ba} (I_a\otimes \mathrm{F}_b^2).
\end{equation}
Finally we define the generalized Controlled-phase or CP gate as\begin{equation}
\label{eqn11}
\mathrm{CP}_{ab}=\sum_{j=0}^{D-1}\dyad{j}{j}_a\otimes Z^j_b=
\sum_{j,k=0}^{D-1}\omega^{jk}\dyad{j}{j}_a\otimes\dyad{k}{k}_b.
\end{equation}
The CP and CNOT gates are related by a local Fourier gate,
similar to the qubit case
\begin{equation}\label{eqn12}
\mathrm{CNOT}_{ab}=(I_a\otimes \mathrm{F}_b) \mathrm{CP}_{ab}
(I_a\otimes \mathrm{F}_b)^\dagger,
\end{equation}
since $\mathrm{F}$ maps $Z$ into $X$ under conjugation (see
Table \ref{tbl1}).


The gates $\mathrm{F}$, $\mathrm{S}_q$, SWAP, CNOT and CP are
unitary
operators that map Pauli operators to Pauli operators under
conjugation, as
can be seen from Tables ~\ref{tbl1} and~\ref{tbl2}. They are
elements of the
so called \emph{Clifford group} on $n$ qudits
\cite{quantph.9802007,PhysRevA.71.042315}, the group of
$n$-qudit unitary
operators that leaves $\PC_n$ invariant under conjugation, i.e.
if $O$ is a
Clifford operator, then $\forall p\in\PC_n$,
$OpO^\dagger\in\PC_n$. From
Tables~\ref{tbl1} and~\ref{tbl2} one can easily deduce the
result of
conjugation by $\mathrm{F}$, $\mathrm{S}_q$, SWAP, CNOT and CP
on \emph{any}
Pauli product. 

\begin{table}
\begin{tabular}{|l|l|l|}
\hline
Pauli operator    & $\mathrm{S}_q$ & $\mathrm{F}$  \\
\hline
\hline
$Z$ & $Z^{q}$ & $X$\\
\hline
$X$ & $X^{\bar q}$ & $Z^{D-1}$\\
\hline
\end{tabular}
\caption{The conjugation of Pauli operators by one-qudit gates
$\mathrm{F}$ and $\mathrm{S}_q$ ($\bar q$ is the multiplicative
inverse of $q$ mod $D$).}
\label{tbl1}
\end{table}

\begin{table}
\begin{tabular}{|l|l|l|l|}
\hline
Pauli product & $\mathrm{CNOT}_{ab}$ & $\mathrm{SWAP}_{ab}$ &
$\mathrm{CP}_{ab}$\\
\hline
\hline
$I_a\otimes Z_b$ & $Z_a\otimes Z_b$ & $Z_a\otimes I_b$ &
$I_a\otimes Z_b$\\
\hline
$Z_a\otimes I_b$ & $Z_a\otimes I_b$ & $I_a\otimes Z_b$ &
$Z_a\otimes I_b$\\
\hline
$I_a\otimes X_b$ & $I_a\otimes X_b$ & $X_a\otimes I_b$ &
$Z_a^{D-1}\otimes X_b$\\
\hline
$X_a\otimes I_b$ & $X_a\otimes X_b^{D-1}$ & $I_a\otimes{X}_b$ &
$X_a\otimes Z_b^{D-1}$\\
\hline
\end{tabular}
\caption{
The conjugation of Pauli products on qudits $a$ and $b$ by
two-qudit
gates CNOT, SWAP and CP. For the CNOT gate, the first qudit $a$
is the
  control and the second qudit $b$ the target.}
\label{tbl2}
\end{table}

\subsection{Qudit stabilizer codes}\label{sbsct2C}

Relative to this group we define a \emph{stabilizer} code $\CC$
to be a $K\geq 1$-dimensional subspace of the Hilbert space
satisfying three conditions:

\begin{description}

\item[C1] There is a subgroup $\SC$ of $\PC_n$ such that
for \emph{every} $s$ in $\SC$ and \emph{every} $\ket{\psi}$ in
$\CC$
\begin{equation}
\label{eqn13}
 s \ket{\psi} = \ket{\psi}
\end{equation}

\item[C2] The subgroup $\SC$ is maximal in the sense that every
$s$ in $\PC_n$
for which \eqref{eqn13} is satisfied for all $\ket{\psi}\in\CC$
belongs to
  $\SC$.

\item[C3] The coding space $\CC$ is maximal in the sense that
any ket
$\ket{\psi}$ that satisfies \eqref{eqn13} for every $s\in\SC$
lies in
  $\CC$.
\end{description}

If these conditions are fulfilled we call $\SC$ the
\emph{stabilizer} of the
code $\CC$. That it is Abelian follows from the commutation
relation \eqref{eqn4}, since for $K>0$
there is some nonzero $\ket{\psi}$ satisfying \eqref{eqn13}.

Note that one
can always find a subgroup $\SC$ of $\PC_n$ satisfying C1 and C2
for any
subspace $\CC$ of the Hilbert space, but it might consist of
nothing but the
identity. Thus it is condition C3 that distinguishes stabilizer
codes from
nonadditive codes. A stabilizer code is uniquely determined by
$\SC$ as
well as by $\CC$, since $\SC$ determines $\CC$ through C3, so in
a sense the code and its stabilizer are dual
to each other.

\subsection{Stabilizer generators and equivalent algebraic
descriptions of qudit stabilizer codes}\label{sbsct2D}

Any stabilizer group can be compactly described using a set of
\emph{group generators}. A generator corresponds to a specific
Pauli product and can be completely specified, see \eqref{eqn3},
by a phase $\lambda$ and two $n$-tuples in $\ZZ_D^n$, $\vect{x}$
and $\vect{z}$. A collection of $k$ generators can therefore be
represented by a $k$-component \textit{phase vector} over
$\ZZ_D$ (that contains all $k$ phases) and a $k\times 2n$
\emph{parity-check matrix} over $\ZZ_D$ with rows corresponding
to the stabilizer generators. For example, the stabilizer
\begin{equation}\label{eqn14}
\SC=\langle\omega^2X_1^3Z_2^2,X_2^2\rangle
\end{equation}
corresponds to the phase vector $(2,0)$ and parity-check matrix
\begin{equation}\label{eqn15}
\vect{S}=\begin{pmatrix}
3 & 0 & \vline & 0 & 2 \\
0 & 2 & \vline & 0 & 0
\end{pmatrix}.
\end{equation}
The angular brackets in \eqref {eqn14} means ``group generated
by", i.e. the group obtained by all possible products of the
group generators.
We call the left $k\times n$ block of the parity-check matrix
the $X$-block, and the right $k\times n$ the $Z$-block, since
they describe the $X$ and $Z$ parts of the stabilizer
generators, respectively.

Note that if $D$ is a prime number, any stabilizer group can be
described using no more than $n$ generators. However, in
composite dimensions one can have more that $n$ generators but
no more than $2n$. For example, in $D=4$, the $n=1$ qudit
stabilizer $\SC=\langle X^2, Z^2 \rangle$ is generated by $2$
(and not 1) elements and specify the stabilizer state
$(\ket{0}+\ket{2})/\sqrt{2}$. There is no way of representing
this state using only 1 generator; $Z^2$ by itself stabilizes
both $\ket{0}$ and $\ket{2}$, hence everything in their span, so
the condition C3 is not satisfied, i.e. the coding space is not
maximal. The same kind of analysis holds for $X^2$. A more
rigorous analysis can be done using the Theorem~\ref{thm1} of
Sec.~\ref{sct3}, which implies for this example that the size of
the stabilizer group must be equal to 4, hence it must be
generated by a single generator of order 4 or two generators
each of order 2. By inspection it is easy to rule out the first
case, so indeed one \emph{must} use 2 generators.

A conjugation of a stabilizer group by a Clifford operation will
change the stabilizer group to an isomorphic group. This will
correspond to a \emph{column operation} on the parity-check
matrix of the stabilizer, together with a transformation of the
phase vector. On the other hand, the generator description of a
stabilizer group is not unique: one can multiply a generator by
another one and still get the same group. This kind of operation
corresponds to a \emph{row operation} on the parity-check
matrix, again keeping in mind that in general the phase vector
will modify. From now on for the sake of simplicity we will
ignore the phase vector, although in real applications one has
to keep track of the phases.

The following represent what we call \emph{elementary
row/column} operations: a) interchanging of
rows/columns, b) multiplication of a row/column by an
\emph{invertible} integer, c)
addition of any multiple of a row/column to a \emph{different}
row/column. The column operations can be realized by
conjugations of the stabilizer by the Clifford operations in
Table~\ref{tbl3}, and the row operations just ensure that the
stabilizer group remains the same, i.e. the new set of
generators generate the same stabilizer group and not a smaller
one.

\begin{table}
\begin{tabular}{|l|l|l|l|}
\hline
Gate & $X$-part & $Z$-part \\
\hline
\hline
$\mathrm{SWAP}_{ab}$ & Interchange columns & Interchange columns
\\
& $a$ and $b$ & $a+n$ and $b+n$\\
\hline
$\mathrm{S}_{q,a}$ & Multiply column $a$ by& Multiply column
$a+n$ \\
& invertible integer $q^{-1}$ & by invertible integer $q$\\
\hline
$(\mathrm{CNOT}_{ab})^m$ & Substract $m$ times & Add $m$ times
column\\
& column $a$ from column $b$ & $b+n$ to column $a+n$\\
\hline
\end{tabular}
\caption{
Conjugation by the above quantum gates correspond to elementary
column operations on the $X$ and $Z$ parts of the parity-check
matrix of a stabilizer code. For the CNOT gate, the first qudit
$a$ is the control and the second qudit $b$ the target. The
integer exponent $m$ means CNOT applied $m$ times (or,
equivalently, the $m$-th power of the CNOT gate).
}
\label{tbl3}
\end{table}

\section{Size-Structure theorems}\label{sct3}

The following theorem generalizes to composite $D$ a well-known
result for prime $D$ that relates the size of the stabilizer
group to the dimension of its stabilizer subspace. Although the
composite $D$ result of our next theorem may have been known by
the community (see e.g. the claim near the end of Sec. 3.6 of
\cite{quantph.9705052}), we have not yet seen a proof of it.

\begin{theorem}[Size]\label{thm1}
Let $\CC$ be an $n$-qudit stabilizer code with stabilizer $\SC$.
Then
\begin{equation}\label{eqn16}
K\times|\SC|=D^n,
\end{equation}
where $K$ is the dimension of $\CC$, $|\SC|$ is the size (or
order) of the stabilizer group $\SC$ and $D$ is the dimension of
the Hilbert space of one carrier qudit.
\end{theorem}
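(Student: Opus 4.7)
The plan is to realize the code projector as a group average over $\SC$ and then compute its trace in two complementary ways. Concretely, I would introduce
\begin{equation*}
P := \frac{1}{|\SC|} \sum_{s \in \SC} s,
\end{equation*}
and argue that (i) $P$ is the orthogonal projector onto $\CC$, so that $\Tr(P) = K$, while (ii) the element-by-element trace on the right collapses to $D^n/|\SC|$ because every non-identity element of $\SC$ is traceless.

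For (i), I would first verify that $P$ is Hermitian and idempotent. Hermiticity follows because every $s \in \SC$ has order dividing $D$, so $s\ad = s^{D-1} \in \SC$, and the sum is therefore invariant under $\dagger$. The key identity $sP = P$ for every $s \in \SC$ is immediate from reindexing $t \mapsto st$, which is a bijection on $\SC$ because $\SC$ is a group; this yields $P^2 = P$ at once. Now any $\ket{\psi} \in \CC$ clearly satisfies $P\ket{\psi} = \ket{\psi}$, and conversely, if $P\ket{\psi} = \ket{\psi}$, then for any $s \in \SC$ we have $s\ket{\psi} = sP\ket{\psi} = P\ket{\psi} = \ket{\psi}$, so $\ket{\psi} \in \CC$ by condition C3. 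Hence $\mathrm{Im}(P) = \CC$.

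For (ii), a direct calculation from \eqref{eqn1} gives the single-qudit identity $\Tr(X^a Z^b) = D\,\delta_{a,0}\delta_{b,0}$ for $a,b \in \ZZ_D$ (a geometric-series computation that works for any $D$, prime or composite), and by tensor-factoring this lifts to $\Tr(\omega^\lambda X^{\vect{x}} Z^{\vect{z}}) = \omega^\lambda D^n$ if $\vect{x} = \vect{z} = \vect{0}$ and $0$ otherwise. A small lemma is needed: the only element of $\SC$ of the form $\omega^\lambda I$ is $I$ itself. Indeed, if $\omega^\lambda I \in \SC$ and $\ket{\psi}$ is any nonzero vector in $\CC$ (which exists because $K \geq 1$), then $\omega^\lambda \ket{\psi} = \ket{\psi}$ forces $\omega^\lambda = 1$. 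Combining (i) and (ii),
\begin{equation*}
K = \Tr(P) = \frac{1}{|\SC|}\sum_{s \in \SC} \Tr(s) = \frac{D^n}{|\SC|},
\end{equation*}
which rearranges to $K \cdot |\SC| = D^n$.

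Honestly, there is no serious obstacle here: the familiar prime-$D$ argument goes through essentially verbatim, once one observes that the trace identity for $X^a Z^b$ does not require $D$ prime, and that the group-averaging step is completely insensitive to whether particular elements of $\SC$ have order equal to or strictly dividing $D$. The only substantive use of the axioms specific to stabilizer codes is condition C3, invoked to identify $\mathrm{Im}(P)$ with $\CC$ rather than merely containing it.
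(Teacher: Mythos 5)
Your proposal is correct and follows essentially the same route as the paper: define the group average $P=\frac{1}{|\SC|}\sum_{s\in\SC}s$, show it is the orthogonal projector onto $\CC$ (using C3 for the reverse inclusion), and evaluate $\Tr(P)$ using the tracelessness of non-identity Pauli products. Your small lemma ruling out elements $\omega^\lambda I$ with $\lambda\neq 0$ in $\SC$ is a point the paper's trace step passes over silently, so if anything your write-up is slightly more careful on that detail.
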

\begin{proof}
Define 
\begin{equation}\label{eqn17}
P:=\frac{1}{|\SC|}\sum_{s\in\SC}s. 
\end{equation}
We will first show that $P$ is the projector onto $\CC$.

It follows at once that 
\begin{equation}\label{eqn18}
P=P^\dagger=P^2,
\end{equation}
where the equalities follow from the group property of $\SC$, so
$P$ is an orthogonal projector. Let $\ket{\psi}$ be an arbitrary
vector that belongs to the stabilizer code $\CC$. Then
$s\ket{\psi}=\ket{\psi}$ $\forall s$ (see the condition C1
\eqref{eqn13} that a stabilizer code must satisfy), which
together with \eqref{eqn17} yields
\begin{equation}\label{eqn19}
P\ket{\psi}=\ket{\psi}.
\end{equation}
Therefore the subspace $\WC$ onto which $P$ projects includes
$\CC$, $\CC\subset \WC$.
Let us now choose an arbitrary $\ket{\phi}\in\WC$. Then 
\begin{equation}\label{eqn20}
\ket{\phi}=P\ket{\phi}=\frac{1}{|\SC|}\sum_{s\in\SC}s\ket{\phi}.\end{equation}
Multiply \eqref{eqn20} on the left by some arbitrary $t\in\SC$
and use the group property of $\SC$ to get
\begin{align}\label{eqn21}
t\ket{\phi}&=tP\ket{\phi}=\frac{1}{|\SC|}\sum_{s\in\SC}ts\ket{\phi}=\frac{1}{|\SC|}\sum_{s\in\SC}s\ket{\phi}\notag\\
&=P\ket{\phi}=\ket{\phi}.
\end{align}
Since $t$ was arbitrary we arrived at the conclusion that 
\begin{equation}\label{eqn22}
t\ket{\phi}=\ket{\phi}, \forall t\in\SC,
\end{equation}
which proves that $\ket{\phi}$ belongs to the stabilizer
subspace $\CC$, and this implies that $\WC\subset \CC$.
Hence $P$ projects strictly onto $\CC$ (and not on some larger
subspace that includes $\CC$). Its trace is just the dimension
$K$ of $\CC$,
\begin{equation}\label{eqn23}
\Tr(P)=K=\frac{1}{|\SC|}D^n.
\end{equation}
where we have used that fact that all Pauli products in
\eqref{eqn17} are traceless except the identity (that must
belong to the sum, since $\SC$ is a group), of trace $D^n$. This
concludes the proof.
\end{proof}
When $D$ is a composite integer the dimension of the stabilizer
code does not have to be a power of $D$ any more (as was the
case in the prime $D$ case), but can be any divisor of $D^n$. As
an illustrative example, consider the 1-qudit stabilizer
generated by $\SC=\langle Z^2\rangle$ in $D=4$. It is obvious
that $\SC$ stabilizes a $K=2$-dimensional code
$\CC$=span\{$\ket{0},\ket{2}$\}, and the size of the stabilizer
is $|\SC|=2$.

Whenever $D=2$ (qubits) it was shown in \cite{quantph.9705052}
(see also Ch. 10.5.7 of
\cite{NielsenChuang:QuantumComputation} for a detailed
discussion) that any stabilizer code can be put into a
``standard form" or ``canonical form", and this is very useful
for constructing encoding/decoding quantum circuits for
stabilizer codes. This result can be generalized at once to
prime $D$. However, for composite dimensions, it is not so
obvious how to do the generalization, and the main technical
difficulty is that $\ZZ_D$ is a ring (and not a field) and
therefore some integers are not invertible. However, in the
following Theorem we show that one can still apply a technique
similar to a Gaussian elimination over $\ZZ_D$ and put any
composite $D$ stabilizer code into a standard form similar to
the one of prime $D$ case.

\begin{theorem}[Standard form]\label{thm2}
Let $\CC$ be a $K$ dimensional $n$-qudit stabilizer code with
stabilizer $\SC$ generated by $k\leqslant 2n$ generators and
with corresponding parity-check matrix $\vect{S}$ of size
$k\times 2n$. Then $\SC$ is isomorphic through a conjugation by
a Clifford operation to another stabilizer $\SC'$, with parity
check matrix $\vect{S'}$ in \emph{standard form}

\begin{equation}\label{eqn24}
\vect{S'}=
\begin{array}{r} r\{ \\ k-r\{ \end{array} \!\!\!\!
\left( \begin{array}{cc|cc}
\raisebox{0ex}[1.5ex]{$\overbrace{\vect{M}}^r$} & 
\raisebox{0ex}[1.5ex]{$\overbrace{\vect{0}}^{n-r}$} & 
\raisebox{0ex}[1.5ex]{$\overbrace{\vect{Z}_1}^r$} & 
\raisebox{0ex}[1.5ex]{$\overbrace{\vect{Z}_3}^{n-r}$} \\
\vect{0} & \vect{0} & \vect{Z}_2 & \vect{Z}_4
\end{array} \right),
\end{equation}
where the dimensions of the block matrices are indicated by
curly brackets.

Here $\vect{M}=$ diag($m_1,\ldots, m_r$), with $1\leqslant
r\leqslant n$, is a diagonal matrix with all $m_j\neq 0$
divisors of $D$. The matrices $\vect{Z}_1$ and $\vect{Z}_2$
satisfy
\begin{align}
\vect{Z}_1\vect{M}&=\vect{M}\vect{Z}_1^T&\text{mod
}D\label{eqn25},\\
\vect{Z}_2\vect{M}&=\vect{0}&\text{mod }D\label{eqn26},
\end{align}
where $T$ in the exponent denotes the transpose, 
and the matrix $\vect{Z}_4$ is a diagonal rectangular matrix,
with diagonal elements divisors of $D$. The notation $\vect{0}$
denotes the zero-block matrix.
\end{theorem}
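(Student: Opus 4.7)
The plan is to mimic the prime-$D$ argument (cf.\ Ch.~10.5.7 of \cite{NielsenChuang:QuantumComputation}) by a Gaussian-elimination-style reduction on the parity-check matrix, adapted so that it still works when $\ZZ_D$ is only a ring and some entries are not invertible. I shall exclusively use the column operations of Table~\ref{tbl3} (realized by SWAP, $\mathrm{S}_q$ and $\mathrm{CNOT}^m$) together with three kinds of row operations on the generators that preserve $\SC$: row swaps, adding a multiple of one row to another, and replacing a row by an integer multiple of itself whenever that integer is coprime to the order of the corresponding generator (so that the cyclic subgroup it generates is unchanged). The reduction splits into two phases: first bring the $X$-block to the block form $(\vect{M},\vect{0};\vect{0},\vect{0})$, then diagonalize the surviving $\vect{Z}_4$ block; the conditions \eqref{eqn25} and \eqref{eqn26} will follow for free.

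For the $X$-block I iterate a Euclidean-style pivot step. Take a nonzero entry of smallest integer lift in $\{1,\dots,D-1\}$, move it to position $(1,1)$ by row and column swaps, and examine its row and column. If the pivot $a$ fails to divide some entry $b$ in its row, write $b=qa+r$ with $0\leq r<a$ and realize this as a $\mathrm{CNOT}^{q}$-type column operation: the remainder $r$ now sits in place of $b$, and if nonzero a further swap puts it into the pivot position where it strictly decreases the pivot; an identical routine handles the column. Once the pivot divides every other entry in its row and column, further $\mathrm{CNOT}^m$ columns and row-sum operations clear them. To upgrade $a$ to a divisor of $D$, Bezout yields an integer $k$ with $ka\equiv \gcd(a,D)\bmod D$; from $k(a/d)\equiv 1\pmod{D/d}$ with $d=\gcd(a,D)$ one gets $\gcd(k,D/d)=1$, so the order of the pivot generator is unchanged when we raise it to the $k$-th power, and this legal row operation replaces $a$ by $d$, a divisor of $D$. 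Recursing on the lower-right $(k-1)\times(n-1)$ submatrix and, at the end, moving the rows with vanishing $X$-part to the bottom produces the advertised $X$-block structure.

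At this point the bottom $k-r$ rows have zero $X$-part, so any column operation supported on the last $n-r$ qudits leaves the already-reduced $\vect{M}$ block and the zero blocks above it intact. Applying the same Euclidean reduction to the $(k-r)\times(n-r)$ block $\vect{Z}_4$, now using only SWAP and $\mathrm{S}_q$ on qudits $r+1,\dots,n$ together with row operations among the bottom $k-r$ generators, diagonalizes $\vect{Z}_4$ with divisors of $D$ on the diagonal. The constraints \eqref{eqn25} and \eqref{eqn26} are then automatic: Clifford conjugation preserves the symplectic commutation form on $\PC_n$, so $\SC$ remains Abelian and the symplectic inner product of any two rows of $\vect{S'}$ must vanish; writing this out for two top rows yields \eqref{eqn25}, and for a top row against a bottom one yields \eqref{eqn26}. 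The main obstacle I anticipate is the step turning the pivot $a$ into $\gcd(a,D)$: the required scalar $k$ is generally not a unit of $\ZZ_D$, so one must argue carefully that raising the corresponding generator to the $k$-th power really is a legitimate generating-set change for $\SC$, which is precisely where the coprimality $\gcd(k,D/d)=1$ is essential.
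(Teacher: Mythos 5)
Your overall strategy coincides with the paper's: reduce the $X$-block to Smith normal form over $\ZZ_D$ using the Clifford column operations of Table~\ref{tbl3} together with group-preserving row operations, then do the same for $\vect{Z}_4$ on the last $n-r$ qudits, and finally obtain \eqref{eqn25}--\eqref{eqn26} from the fact that Clifford conjugation preserves the symplectic commutation condition \eqref{eqn28}. The paper simply cites the Smith normal form \cite{Newman:IntegralMatrices,Storjohann96nearoptimal} as a black box; you have unrolled it into an explicit Euclidean pivot algorithm, and it is precisely in that unrolling that a genuine gap appears.

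The step that upgrades the pivot $a$ to $d=\gcd(a,D)$ is not justified as written. You pick $k$ with $ka\equiv d\bmod D$ and argue that $\gcd(k,D/d)=1$ guarantees that raising the pivot generator to the $k$-th power leaves the generated group unchanged. But the order of that generator is not $D/d$: at this stage its row still carries an arbitrary $Z$-part, so its order is governed by all of its entries, not just the pivot. A solution of $ka\equiv d\bmod D$ need not be coprime to that order. Concretely, for $D=12$, $a=4$ the congruence $4k\equiv 4\bmod{12}$ admits $k=4$, which passes your test $\gcd(k,D/d)=\gcd(4,3)=1$; yet if the generator is $X^4Z$ (order $12$), its fourth power generates a subgroup of order $3$, so the ``row operation'' strictly shrinks $\SC$. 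The repair is standard and is exactly what the Smith-form machinery supplies: for every $a$ there is a \emph{unit} $u\in\ZZ_D$ with $ua\equiv\gcd(a,D)\bmod D$ (lift the inverse of $a/d$ mod $D/d$ to a residue coprime to $D$ via the Chinese remainder theorem), so the rescaling can always be taken to be multiplication of a row by an invertible integer, one of the allowed elementary row operations. A second, smaller slip: SWAP and $\mathrm{S}_q$ alone (column permutations and unit scalings) cannot diagonalize a general $\vect{Z}_4$ --- already the $1\times 2$ block $(1\ 1)$ resists them, since a single row admits no row additions --- so you also need the $Z$-part column additions furnished by $(\mathrm{CNOT}_{ab})^m$ with $a,b$ among the last $n-r$ qudits; these are harmless there because the corresponding $X$-columns are identically zero, which is the observation the paper makes at this point.
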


\begin{proof}
The key ingredient of the proof is the Smith normal form:
through a sequence of elementary row/column operations mod $D$
(see the discussion at the end of Sec.~\ref{sct3}, a
matrix $M$ over $\ZZ_D$ can be converted to the Smith normal
form
\cite{Newman:IntegralMatrices,Storjohann96nearoptimal} (see also
Sec. IV.B of \cite{PhysRevA.81.032326} for an example)
\begin{equation}
  \vect{M'}= \vect{V}\cdot\vect{M}\cdot\vect{W},
\label{eqn27}
\end{equation}
where $\vect{V}$ and $\vect{W}$ are invertible (in the mod $D$
sense) square
matrices, and $\vect{M'}$ is a diagonal rectangular matrix, with
diagonal elements divisors of $D$. The matrix $\vect{V}$
represents the row operations and $\vect{W}$ the column
operations.

Note that in our case all necessary column operations can be
realized by the corresponding gates in Table~\ref{tbl3}, and,
more important, they \emph{do not mix} the $X$ and $Z$ parts of
the parity-check matrix. Therefore, without being concerned with
what happens to the $Z$ part of the parity-check matrix
$\vect{S}$, we can put its $X$ part in the Smith normal form
(again we stress that this can be done because the $Z$ part to
not interfere with the $X$ part), and arrive at a parity-check
matrix of the form \eqref{eqn24}. Next by another series of
Clifford gates acting only on the last $(n-r)$ qudits one can
further put the $\vect{Z}_4$ matrix in its Smith normal form,
without modifying the $X$ part of the parity-check matrix (which
is already in Smith normal form), since there are only zeros on
the last $n-r$ columns of the $X$ part; note that the row
operations are done on the last $k-r$ rows, and again do not
modify the $X$ part of the parity-check matrix. Since the
elementary row operations do not change the stabilizer group and
the elementary column operations correspond to Clifford gates,
see Table~\ref{tbl3}, our whole transformation from $\SC$ to
$\SC'$ is a conjugation by a Clifford operation.

Finally note that conjugation by Clifford operations do not
change the commutation relations. It is easy to deduce from
\eqref{eqn3} that two Pauli products described by
$(\vect{x}|\vect{z})$ and $(\vect{x'}|\vect{z'})$ commute if and
only if
\begin{equation}\label{eqn28}
\vect{x}\cdot\vect{z'}=\vect{z}\cdot\vect{x'}\quad\text{mod }D,
\end{equation}
where the dot represents the usual inner product of two vectors
in $\ZZ_D$, e.g. the sum of the products of individual
components.
Using \eqref{eqn28} we observe at once that the final set of
generators commute if and only if \eqref{eqn25} and
\eqref{eqn26} hold, and this concludes the proof.
\end{proof}

It is proved in \cite{Storjohann96nearoptimal} that a $M\times
N$ matrix can be reduced to the Smith form in only
$\OC(M^{\theta-1}N)$ operations from $\ZZ_D$, where $\theta$ is
the exponent for matrix multiplication over the ring $\ZZ_D$,
i.e. two $M\times M$ matrices over $\ZZ_D$ can be multiplied in
$\OC(M^{\theta})$ operations from $\ZZ_D$. Using standard matrix
multiplication $\theta=3$, but better algorithms
\cite{CoppersmithWinograd} allow for $\theta=2.38$. This ensures
that our procedure outlined above is computationally efficient.

Whenever $D$ is prime its only non-zero divisor
is $1$, hence $\vect{M}$ is just the identity matrix and $\vect{Z}_4$ has only 0's and 1's on the diagonal. From \eqref{eqn25} and \eqref{eqn26}, $\vect{Z}_1$ must be symmetric and $\vect{Z}_2$ must be the zero matrix, hence our
standard form reduces to the one known for prime $D$'s \cite{quantph.9705052}. \footnote{Except the
fact that we also perform column operations by Clifford
conjugations, which further simplifies the standard form of
\cite{quantph.9705052}.}

Also for prime $D$ it was proven in \cite{quantph.0211014} that
any stabilizer group $\SC$ is Clifford equivalent to another
stabilizer $\SC'$ generated only by $Z$'s, $\SC'=\langle
Z_1,Z_2,\ldots,Z_k\rangle$ \footnote{Their result is more
general and holds for any finite field, provided one redefines
the Pauli operators in \eqref{eqn1} accordingly.}. This result
is not true for composite $D$'s, and one easy to see
counterexample is the 1-qudit stabilizer $\SC=\langle
X^2,Z^2\rangle$ in $D=4$, mentioned before in
Sec.~\ref{sbsct2D}.

\section{Conclusions and open questions}\label{sct4}

We studied stabilizer codes with carrier qudits of composite
dimension $D$. We proved a size theorem that relates the size of
the stabilizer group to the dimension of its stabilized code.
Furthermore, we have shown that any stabilizer code can be put
in a standard (canonical) form through a series of Clifford
gates, and we provided a constructive algorithm. Our result
generalizes what was known in the prime $D$ case and may be
useful in constructing efficient encoding/decoding quantum
circuits, following the procedures outlined in
\cite{quantph.9705052}.

Our approach was based on the generalized Pauli group introduced
by \eqref{eqn1} and \eqref{eqn2}. However, for composite
dimensions, this is not the only way of introducing Pauli
operators. An alternative way is to split the dimension in its
prime-power factors which will then induce a natural splitting
of the carrier qudits in subsystems of prime-power dimensions.
In each of these subsystems then one can define Pauli operators
using finite fields (any finite field is isomorphic to a
prime-power canonical representation), as done e.g. in
\cite{quantph.0211014}. Although this is the scope of future
work, we think it may be useful since in a sense ``decouples"
the stabilizer into prime-power subsystems, and the latter can
be put into standard forms as done in \cite{quantph.0211014}.
One can then use previously known results for stabilizers over
finite fields to study various properties of composite $D$
stabilizers, and this may help building more efficient quantum
error correcting codes.

Finally one may ask if there exist alternative standard forms of qudit
stabilizer codes, perhaps more useful that the one presented
here. We do not know if such forms exist, and searching for them may be worthwhile.

\begin{acknowledgments} 
The research described here received support from the Office of
Naval Research and from the National Science Foundation through
Grant No. PHY-0757251.
\end{acknowledgments}


%

\end{document}